\documentclass[11pt]{article}

\usepackage{mathtools}
\usepackage{amssymb}
\usepackage{amsthm}

%the following are defined by myself%
\newcommand{\C}{\mathbb C}

\newcommand{\F}{\mathbb F}

\newcommand{\Q}{\mathbb Q}
\newcommand{\Z}{\mathbb Z}

\newcommand{\fp}{\mathfrak p}
\newcommand{\fP}{\mathfrak P}
\newcommand{\co}{\mathcal{O}}

\newtheorem{rmk}{Remark}

\newtheorem{lem}{Lemma}

\newtheorem{thm}{Theorem}

\begin{document}
\title{Nonexistence of two classes of generalized bent functions}
\author{Jianing Li, Yingpu Deng\\\
Key Laboratory of Mathematics Mechanization,\\
NCMIS, Academy of Mathematics and Systems Science,\\
Chinese Academy of Sciences, Beijing 100190, P.R. China\\
(Email: lijianing19891026@163.com, dengyp@amss.ac.cn)}
\date{}
\maketitle
\begin{abstract}
We obtain  new nonexistence results of generalized bent functions from $\Z^n_q$ to $\Z_q$ (called type $[n,q]$) in the case that there exist cyclotomic integers in $ \Z[\zeta_{q}]$ with  absolute value $q^{\frac{n}{2}}$. This result generalize the previous two scattered nonexistence results $[n,q]=[1,2\times7]$ of Pei \cite{Pei} and $[3,2\times 23^e]$ of Jiang-Deng \cite{J-D} to a generalized class. In the last section, we remark that this method can apply to the GBF from $\Z^n_2$ to $\Z_m$. 
\end{abstract}

\textbf{Keywords}\quad Generalized bent functions $\cdot$ Cyclotomic fields $\cdot$ Prime
ideal factorization $\cdot$ Class Group $\cdot$ 

\textbf{Mathematics Subject Classification}\quad 11R04 $\cdot$ 94A15

\section{Introduction}

Bent functions were first introduced by Rothaus \cite{Rothaus} in
1976, which are functions from $\mathbb{Z}_2^n$ to $\mathbb{Z}_2$ with some property, where $\mathbb{Z}_m=\mathbb{Z}/m\mathbb{Z}$ for a positive integer $m$.
Dillon \cite{Dillon} showed that bent functions are the characteristic functions
of elementary Hadamard difference sets. Bent functions have many applications to coding theory,
cryptography and sequence designs \cite{OSW}. In coding theory, bent
functions have the maximum distance to the first order binary Reed-Muller code. In a cryptsystem, functions with large nonlinearity values are usually
employed to resist linear crypto-analysis and correlation-attack, and bent functions are just the ones with maximum nonlinearity.

Bent functions have many generalizations. Kumar et al. \cite{Kumar} considered bent functions from $\mathbb{Z}_q^n$ to $\mathbb{Z}_q$ in 1985, where $q$ and $n$ are positive integers. Later, bent functions were generalized
to arbitrary finite abelian groups \cite{lsy,solo}, even to arbitrary finite groups \cite{poin,xu}.

A natural question is when bent functions do exist. Rothaus \cite{Rothaus} proved that bent functions from $\mathbb{Z}_2^n$ to $\mathbb{Z}_2$ exist if and only if $n$ is even. However, this problem is far from being solved
for other type of generalized bent functions. For generalized bent functions (GBF for short) from $\mathbb{Z}_q^n$ to $\mathbb{Z}_q$ defined in \cite{Kumar}, Kumar et al. constructed them except the case that $n$ is odd and $q\equiv2\pmod 4$. There are many nonexistence results for GBF defined in \cite{Kumar}, for example, see \cite{Pei,Ikeda,Feng,Feng2,Feng3,J-D} and the references in \cite{J-D}. Especially, Feng and co-authors built connections between nonexistence of GBF and class numbers of imaginary quadratic fields in \cite{Feng,Feng2,Feng3}. In fact, they proved stronger results, i.e., there are no algebraic integers with prescribed absolute values in some cyclotomic field. However, GBFs' existence request that there are algebraic integers with prescribed absolute values in some cyclotomic field and some specific conditions (so-called bent conditions) are also satisfied. In \cite{J-D},
Jiang and Deng showed that there are no GBFs from $\mathbb{Z}_q^n$ to $\mathbb{Z}_q$ with $n=3$ and $q=2\cdot 23^e$ for $e\geq1$. Notice that there are algebraic integers with prescribed absolute values 
$(2\cdot 23^e)^{\frac{3}{2}}$ in the cyclotomic field $\mathbb{Q}(\zeta_{23^e})$, but Jiang and Deng showed that the bent conditions are not satisfied.

Motivated by the results in \cite{Ikeda} and \cite{J-D}, we generalize the method used there and we obtain further nonexistence results for GBFs from $\mathbb{Z}_q^n$ to $\mathbb{Z}_q$. For the details, see the following section.

This paper is organized as follows. In Section 2, we list some previous work and state our main result. We prove the main result in Section 3. In section 4, we apply our method to GBFs from $\mathbb{Z}_2^n$ to $\mathbb{Z}_m$ and obtain similar results. Finally, a short
conclusion is given.

\section{Previous Work and Our Main Result}

Let $q\geq 2$ be an integer, $\Z_q=\Z/{q\Z}$, $\zeta_q=\text{exp}(\frac{2\pi i}{q})$. A function $f$ from $\Z^n_q$ to $\Z_q$ 
is called a Generalized Bent Function(GBF) with type $[n,q]$ if $F(\lambda)\overline{F(\lambda)}=q^n$ for every $\lambda\in \Z^n_q$ where
$$ F (\lambda)=\sum_{x\in\Z^n_q}\zeta^{f(x)}_q\cdot \zeta^{-x\cdot \lambda}_q $$ is the Fourier transform of the function $\zeta^{f(x)}_q$, $x\cdot \lambda$ is the standard dot product, and $\overline{F(\lambda)}$ is the complex conjugate of $F(\lambda)$.

Note that, if there is no element in $\Z[\zeta_q]$ with absolute value $q^{\frac{n}{2}}$, then there is no GBF with type $[n,q]$.
 Feng, and Feng-Liu \cite{Feng}\cite{Feng2}\cite{Feng3} get nonexistence results by showing that there is no cyclotomic integers with prescribe absolute values.  For a survey of their results, see \cite{J-D}. Here we  list a special case which is  proved by Feng: 
 \\

 (1) Let $p\equiv7 \pmod8$ be a prime, $f$ the order of $2$ modulo $p^e$, and $m$ the smallest odd positive integer such that $x^2+py^2=2^{m+2}$ has $\Z$-solutions, then there is no GBF with type $[n<\frac{m}{s}, 2 p^e]$, where  $n$ is  odd and $s=\frac{p^e-p^{e-1}}{2f}$ .

 However, in this paper we focus on the nonexistence  of GBF  in the case that \textbf{there exist cyclotomic integers with prescribe absolute value}. Toward this direction, there are only three results:
\\

  (2) Pei(1993)\cite{Pei} proved that there is no GBF with type $[n=1,q=2\times 7]$. Notice that  the absolute value of $(\frac{1+\sqrt{-7}}{2}) \sqrt{-7}\in\Z[\zeta_7]$ is $14^{\frac{1}{2}}$.
 \\

 (3) Ikeda(1999)\cite{Ikeda}  proved that there is no GBF of type $[n=1,q=2\times p^e]$, where $p\equiv 7 \pmod8$ is a prime. This result contains both cases there are or not cyclotomic integers with prescribe absolute values.  For example if $p=7$, the absolute value of $(\frac{1+\sqrt{-7}}{2}) (\sqrt{-7})^e\in \Z[\zeta_{7^e}]$ is $(2\times7^e)^{\frac{1}{2}}$. If $p=23$, there is no cyclotomic integers in $\Z_[\zeta_{23}]$ with absolute value $2\times 23$. Of course, this theorem include Pei's result.
 \\
 
 (4) Jiang-Deng(2015)\cite{J-D} proved that there is no GBF with type $[n=3,q=2\times 23^e]$. Notice that  the absolute value of $(\frac{3+\sqrt{-23}}{2}) (\sqrt{-23})^e\in \Z [\zeta_{23^e}]$ is $(2\times23^e)^\frac{3}{2}$.
\\

In this article we extend the  results (2), (4) to a general situation by generalizing the method developed in \cite{Ikeda} and \cite{J-D}.  

We need a definition to state our result. Let $p\equiv7\pmod 8$ be a prime number, then $2$ splits in $\mathbb{Q}(\sqrt{-p})$. Let $ \fp$ be a  prime ideal of $\Q(\sqrt{-p})$ above $2$. We define $t_p$  to be  the minimal positive integer such that $\fp^{t_p}$ is pricinpal in this article. By Gauss's genus theory, $t_p$ is odd. For more about $t_p$, see Remark~\ref{basic fact of t_p} below. In this article, our main result is the  following:

\begin{thm} \label{main}
If $p\equiv7 \pmod 8$ is a prime and  the order of $2$ modulo $p$ is $\frac{p-1}{2}$, then there doesn't exist GBF with type $[n=t_p,q=2p]$. If $p$ further satisfies $2^{p-1}\not\equiv 1 \pmod {p^2}$, then there doesn't exist GBF with type $[n=t_p,q=2p^e]$ for every $e\geq1$.		
\end{thm}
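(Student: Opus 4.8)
The plan is to suppose a GBF $f$ of type $[t_p,\,2p^e]$ exists and to derive a contradiction from the single Fourier value $\gamma := F(\mathbf{0}) = \sum_{x\in\Z_{2p^e}^{\,t_p}}\zeta_{2p^e}^{f(x)}$, regarded as an element of $\Z[\zeta_{2p^e}] = \Z[\zeta_{p^e}]$ (the two rings coincide since $p^e$ is odd, so $-\zeta_{p^e}$ is a primitive $2p^e$-th root of unity). The GBF hypothesis gives first the norm equation $\gamma\bar\gamma = (2p^e)^{t_p}$, but — and this is the point of the method of \cite{Ikeda,J-D} — also the ``bent conditions'': $\gamma$ is a sum of exactly $(2p^e)^{t_p}$ powers of $\zeta_{2p^e}$, namely $\gamma = \sum_{j=0}^{2p^e-1} m_j\zeta_{2p^e}^{\,j}$ with $m_j = \#f^{-1}(j)\ge 0$ and $\sum_j m_j = (2p^e)^{t_p}$; moreover every other Fourier value $F(\lambda)$ has absolute value $(2p^e)^{t_p/2}$ and the $F(\lambda)$ are tied to the $\zeta_{2p^e}^{f(x)}$ by the inversion formula. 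Purely norm-theoretic methods (as in \cite{Feng,Feng2,Feng3}) cannot close the case because algebraic integers of the prescribed absolute value do exist, so the bent conditions must be brought in.

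First I would determine the ideal factorization of $(\gamma)$ in $\Z[\zeta_{p^e}]$. As $\gcd(2,p^e)=1$, the rational prime $2$ is unramified, with residue degree $\mathrm{ord}_{p^e}(2)$; the hypothesis $\mathrm{ord}_p(2)=(p-1)/2$ gives this as $\phi(p^e)/2$ when $e=1$, and the extra hypothesis $2^{p-1}\not\equiv 1\pmod{p^2}$ forces $\mathrm{ord}_{p^e}(2)=p^{e-1}(p-1)/2=\phi(p^e)/2$ for all $e\ge 1$. Hence $(2)=\fP\bar\fP$ is a product of two distinct primes, and complex conjugation swaps them (because $-1\notin\langle 2\rangle$, since $(p-1)/2$ is odd). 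The prime $p$ is totally ramified, $(p)=\fq^{\phi(p^e)}$ with $\fq=(1-\zeta_{p^e})$ principal and $\bar\fq=\fq$. Reading off $\fP$-, $\bar\fP$- and $\fq$-valuations from $\gamma\bar\gamma=(2p^e)^{t_p}$ then gives
\[
 (\gamma)=\fP^{a}\,\bar\fP^{b}\,\fq^{c},\qquad a+b=t_p,\qquad c=\tfrac{1}{2}\,t_p\,e\,\phi(p^e).
\]

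Next I would use the class group together with the definition of $t_p$. Since $\fP\bar\fP=(2)$ and $\fq$ are principal, $(\gamma)$ being principal forces $\fP^{\,a-b}$ to be principal; and $a\ne b$ because $a+b=t_p$ is odd. Now $\fp\,\co_{\Q(\zeta_{p^e})}=\fP$ (there is exactly one prime above $\fp$ and it is unramified), so the order of $\fP$ in $\mathrm{Cl}(\Q(\zeta_{p^e}))$ divides $t_p$, and in fact equals $t_p$ by the properties of $t_p$ recorded in Remark~\ref{basic fact of t_p}; together with $|a-b|\le t_p$ and $t_p$ odd, this gives $\{a,b\}=\{t_p,0\}$. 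Also, $t_p$ being odd and coprime to $p$ (as $h(\Q(\sqrt{-p}))<p$) is what makes this reduction consistent with the descent $t_p\mid(a-b)\phi(p^e)/2$ obtained by taking norms to $\Q(\sqrt{-p})$. After possibly interchanging $\fP$ and $\bar\fP$ we may thus assume
\[
 (\gamma)=\fP^{t_p}\fq^{c}=(\pi)\,(1-\zeta_{p^e})^{c},\qquad \fp^{t_p}=(\pi),\ \ \pi\in\Q(\sqrt{-p}),\ \ \pi\bar\pi=2^{t_p},
\]
so $\gamma=\varepsilon\,\pi\,(1-\zeta_{p^e})^{c}$ for a unit $\varepsilon\in\Z[\zeta_{p^e}]^{\times}$, and the norm equation collapses to $\varepsilon\bar\varepsilon$ being a prescribed totally positive real unit. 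This is the step where $n=t_p$ is essential: it is exactly the value for which the class-group constraint leaves only this single rigid shape.

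Finally — and I expect this to be the main obstacle — I would show that a $\gamma$ of this rigid form cannot be $\sum_j m_j\zeta_{2p^e}^{\,j}$ with $m_j\ge 0$ arising from a GBF. Since $\zeta_{2p^e}\equiv -1\pmod{\fq}$, reduction modulo successive powers of $\fq$ turns $\gamma=\varepsilon\,\pi\,(1-\zeta_{p^e})^c$ into a chain of congruences on the alternating sums $\sum_j(-1)^j m_j$ and, more finely, on the $\fq$-adic expansion of $\gamma$; combining these with the constraints imposed on the remaining $F(\lambda)$ by the inversion formula (and, where useful, reduction modulo $\fP$, where $\zeta_{2p^e}$ becomes a primitive $p^e$-th root of unity in $\F_{2^{\phi(p^e)/2}}$) should force a numerical contradiction. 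This is precisely the part that unifies and generalizes the case-by-case computations of Pei \cite{Pei} (where $t_p=1$) and Jiang--Deng \cite{J-D} (where $p=23$, $t_p=3$). For $e\ge 2$ nothing new is needed: the hypothesis $2^{p-1}\not\equiv 1\pmod{p^2}$ was used only to keep the splitting type of $2$ — and hence every structural input above — identical to the case $e=1$, with $\phi(p)$ replaced throughout by $\phi(p^e)$.
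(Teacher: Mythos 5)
Your first half is essentially the paper's Lemmas~1 and~2: from the factorizations of $2$ and $p$ in $\Q(\zeta_{p^e})$, the class-group constraint coming from the definition of $t_p$, and the oddness of $t_p$, you correctly pin down that each Fourier value must have the rigid shape (root of unity) $\times$ $\pi$ or $\bar\pi$ $\times$ (power of $1-\zeta_{p^e}$); in particular $F(\lambda)\notin(2)\co_K$. (A small technical point: to conclude that the unit $\varepsilon$ is a root of unity you should divide by $(\sqrt{-p})^{et}$ rather than $(1-\zeta_{p^e})^{c}$, so that the quotient $\alpha$ satisfies $\alpha\bar\alpha=2^{t}$ exactly and Kronecker's theorem applies; with your normalization $\varepsilon\bar\varepsilon$ is only a unit, not $1$, and the argument does not close as stated.)

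The genuine gap is the final step, which you yourself flag as "the main obstacle": you propose to get a contradiction from $\fq$-adic expansions of $F(\mathbf 0)$ and alternating sums of the fibre counts $m_j$, but no contradiction can come from a single Fourier value --- elements of the prescribed shape exist and the positivity/sum constraints on the $m_j$ are far too weak to exclude them. What actually closes the proof is a combinatorial argument using \emph{all} $2^{t}-1$ elements $v$ of order $2$ in $\Z_q^{t}$ simultaneously. Concretely: since $\zeta_q^{x\cdot v}=\pm1$, one has $F(\lambda)+F(\lambda+v)\in(2)\co_K$, and combined with $F(\lambda+v)=F(\lambda)(\pm\zeta^{i})$ and $F(\lambda)\notin(2)\co_K$ this forces $F(\lambda+v)=\pm F(\lambda)$ exactly. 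Writing $N_v$ and $M_v$ for the sets where the sign is $+$ resp.\ $-$, Parseval gives $|N_v|=|M_v|=q^{t}/2$; a three-element relation $u+v+w=0$ shows that the only sign patterns that can occur correspond to subgroups of $G\cong\F_2^{t}$ of index $1$ or $2$; and the resulting system of $2^{t}$ linear equations in the cardinalities yields $|\bigcap_v N_v|=q^{t}/2^{t}=p^{et}$, which is odd, contradicting the fact that $\bigcap_v N_v$ is a union of cosets of $G$ and hence has cardinality divisible by $2^{t}$. This translation-and-counting mechanism is the actual content generalizing Pei, Ikeda and Jiang--Deng, and it is absent from your proposal.
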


\begin{rmk}
	The condition $2^{p-1}\not\equiv 1\pmod{p^2}$ is for forcing the inert degree of $2$ in $\Q(\zeta_{p^e})$ is $\frac{\phi(p^e)}{2}$, where $\phi$ is Euler function. For $p<6\times10^9$, there is only one $p=3511 \equiv 7\pmod 8$ such that $2^{p-1}\equiv 1\pmod{p^2}$.
	
\end{rmk}

\begin{rmk}\label{basic fact of t_p}
	\textbf{Some basic facts about the number $t_p$}
	
	Of course, $t_p$ is just the order of $\fp$ in the class group of $\Q(\sqrt{-p})$. The definiton of $t_p$ does not depend on the chioce of $\fp$.  Another prime ideal above $2$ is the inverse of $\fp$ in the class group, so they have the same order. By Gauss's genus theory or class field theory \cite{Wash} Theorem 10.4(b), the class number of $\Q(\sqrt{-p})$ is odd, hence $t_p$ is also odd.  In \cite{Feng} Remark 2, Feng gave an  estimate that $t_p>\frac{\log{p}}{\log 2}-2$. In particular, $t_p\geq3$, if $p\geq23$. He also showed that $t_p$ is the smallest odd positive integer such that $x^2 + py^2 = 2^{t_p +2}$ has solutions with $(x,y)\in \Z^2$. So, for computing $t_p$, we can first caluate the class number, then for each divisor $>\frac{\log{p}}{\log 2}-2 $ check out this equation.
\end{rmk}	
	
We give a small table of the primes less than $200$ and  satisfy all conditions in Theorem~\ref{main}. From the table below, it can be seen that our result includes the results of Pei and Jiang-Deng.

	\begin{tabular}{l*{9}{c} r}
		$p$   &7 &23 &47 &71 &79 &103 &167 &191 &199 \\
		\hline 
		$t_p$  &1 &3 &5 &7 &5 &5 &11 &13 &9 \\

	\end{tabular}

 By the estimate in Remark~\ref{basic fact of t_p} we know  that $t_p=1$ if and only if $p=7$.  So our result is different from Ikeda's.

Now we explain  the condition $n=t_p$ briefly.
Fix a prime $p$ satisfying all  conditions in the Theorem~\ref{main}. Let us focus on the nonexistence of GBF with type $[n,q=2p^e]$ and  let $n$ always be odd.

 If $n<t_p$, Feng's above result (1) shows that there does not exist an element in $\Z[\zeta_{p^e}]$ with absolute value $q^{n/2}$, so there is no GBF with type $[n,q=2p^e]$ with $n<t_p,e\geq1$. Therefore, Feng's result and our result
have shown that there is no GBF with type $[n,2p^e]$ for every odd $n\leq t_p, e\geq1$.

If $n\geq t_p$, the case is different from the above. Because there exist cyclotomic integers with absolute value $q^{n/2}$, see Lemma~\ref{n=t}. Our result treats the case $n=t_p$.  However, the case $n>t_p$ is unknown even there is no single example be checked.

\section{Proof of Theorem~\ref{main}}

 For the basic facts of algebraic number theory used in this paper, one can found them in any standard textbook of algebraic number theory, for example, \cite{Marcus,Flo}.

 Let $p\equiv7 \pmod8$ be a prime  such that the order of $2$ modulo $p$ is $\frac{p-1}{2}$, $\zeta=\zeta_{p^e}=\text{exp}(\frac{2\pi i}{p^e})\in \C$,  $L=\Q(\sqrt{-p})$ and $K=\Q(\zeta)$, then $L\subset K$.  We have the  prime ideal factorization of $p$ and $2$ in $L$ and $K$ as follows. 
 
 For $p$, we have $(p)\co_K=(1-\zeta)^{\phi(p^e)}\co_K$, and $\sqrt{-p}\co_K=(1-\zeta)^\frac{\phi(p^e)}{2}\co_K$.

 For $2$, 
 the conditions in the Theorem~\ref{main} implies that the order of $2$ modulo $p^e$ is $\frac{\phi(p^e)}{2}$ for each $e$, by the cyclotomic reciprocity law  $2\co_L=\fp\co_L \overline{\fp}\co_L$,
 where $\bar{\fp}$ is the complex conjuate of $\fp$, and $\fp=\fP\co_K,\overline{\fp}=\overline{\fP}\co_K$ are inert in $\Q(\zeta)$, and $L$ is the decomposition field of $2$ in $K$, also the fixed field by Frobenieus of $2$.

 Let $\gamma\in \co_L$ such that $\fp ^t=(\gamma)$, then $\overline{\fp} ^t=(\overline{\gamma})$, where $t=t_p$ for simplicity, then $$\gamma \overline{\gamma}=2^t \text{  as elements.}$$ Since the units of $\co_L$ are $\pm1$, the elements in $\co_L$ with absolute value $2^\frac{t}{2}$ are $\pm\gamma$ and $\pm\bar{\gamma}$.

Now assume $f$ is a GBF with type $[t=t_p,q=2p^e]$, then $F(\lambda)\in \co_K$ with absolute value $q^{\frac{t}{2}}$. The following two lemmas tell us the form $F(\lambda)$ must be. The first lemma is a slight refined version of \cite{Feng} Lemma 2.2.
\begin{lem}
	If $\alpha \overline{\alpha}=2^n$ for some $\alpha\in \co_K$, then there exist a unique $i \pmod {p^e}$ such that $\alpha \zeta^i \in \co_L$. 
	
\end{lem}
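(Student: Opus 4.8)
The plan is to exploit the factorization $(2)\co_K = \fP\,\overline{\fP}$ into two inert primes of residue degree $\phi(p^e)/2$, together with the fact that $2\co_K$ is the full ramified-away part, so that the principal ideal $(\alpha)\co_K$ with $\alpha\overline{\alpha}=2^n$ must be supported only on $\fP$ and $\overline{\fP}$. Concretely, from $\alpha\overline\alpha = 2^n$ I would write $(\alpha) = \fP^{a}\,\overline{\fP}^{\,b}$ for some nonnegative integers $a,b$; applying complex conjugation gives $(\overline\alpha) = \overline{\fP}^{\,a}\,\fP^{b}$, and multiplying the two yields $(2^n)\co_K = \fP^{a+b}\,\overline{\fP}^{\,a+b}$, which forces $a+b = n$.

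Next I would descend to $L=\Q(\sqrt{-p})$, the decomposition field of $2$ in $K$. The key point is that $\fP = \fp\co_K$ and $\overline{\fP}=\overline{\fp}\,\co_K$ are extended from $L$, so the ideal $\fp^{a}\,\overline{\fp}^{\,b}$ of $\co_L$ extends to $(\alpha)\co_K$. Since $\co_K/\co_L$ is an extension in which $2$ is inert with no further splitting, I want to argue that $(\alpha)\co_K$ is actually generated by an ideal of $\co_L$; equivalently, that $\alpha$ times a suitable root of unity lies in $L$. To make this precise, consider $\beta := \alpha^{1-c}$ (or better, $\alpha/\sigma(\alpha)$) where $c$ is complex conjugation — but the cleaner route is: the ideal $\fp^{a}\overline{\fp}^{\,b}$ has some class in $\mathrm{Cl}(L)$; since $a+b=n$ and $\fp\overline{\fp}=(2)$ is principal, this class equals the class of $\fp^{a-b}$. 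Whatever its class, the ideal $(\alpha)\co_K$ is principal in $\co_K$ and is the extension of an $\co_L$-ideal $\mathfrak a := \fp^a\overline{\fp}^{\,b}$. I would then invoke that the natural map $\mathrm{Cl}(L)\to\mathrm{Cl}(K)$ is injective here — this holds because $K/L$ is totally ramified at the prime above $p$ (the ramification index $\phi(p^e)/2$ equals $[K:L]$), so by a standard capitulation/ramification argument no nontrivial ideal class of $L$ becomes principal in $K$. Hence $\mathfrak a$ is principal in $\co_L$, say $\mathfrak a = (\delta)$ with $\delta\in\co_L$, and then $(\alpha) = (\delta)\co_K$, so $\alpha = u\delta$ for a unit $u\in\co_K^\times$.

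It remains to pin down the unit $u$ and extract the root of unity. Since $\alpha\overline\alpha = 2^n = N_{L/\Q}(\delta)\cdot(\text{unit})$ and $|\alpha| = |\delta|$ in every archimedean place, $u$ has absolute value $1$ at every complex embedding, so $u$ is a root of unity in $K$; the roots of unity in $\Q(\zeta_{p^e})$ are exactly $\pm\zeta^{j}$. Therefore $\alpha = \pm\zeta^{j}\delta$ for some $j$, i.e. $\alpha\zeta^{-j}\in\co_L$ up to sign, and absorbing the sign (which is itself $\zeta^{0}$ times $\pm1$, and $-1 = \zeta^{p^e/\gcd}$... more carefully, $-\delta\in\co_L$ too) gives some $i$ with $\alpha\zeta^{i}\in\co_L$. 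For uniqueness: if $\alpha\zeta^{i}$ and $\alpha\zeta^{i'}$ both lie in $\co_L$, then $\zeta^{i-i'}\in L$, and since $L\cap\Q(\zeta_{p^e}) = \Q(\sqrt{-p})$ contains no root of unity except $\pm1$, we get $\zeta^{i-i'}=\pm1$; because $p$ is odd, $\zeta^{i-i'}=1$ forces $i\equiv i'\pmod{p^e}$, while $\zeta^{i-i'}=-1$ is impossible for odd $p^e$. This gives uniqueness mod $p^e$.

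The main obstacle is the injectivity of $\mathrm{Cl}(L)\hookrightarrow\mathrm{Cl}(K)$ (equivalently, the claim that the $\co_L$-ideal $\fp^a\overline{\fp}^{\,b}$ is already principal once its extension to $\co_K$ is). One must argue this from the fact that $K/L$ is ramified at a prime — the standard fact is that if $\mathfrak{l}$ is a prime of $L$ totally ramified in $K/L$, then $\mathrm{Cl}(L)\to\mathrm{Cl}(K)$ is injective, since $\mathfrak{l}$ generates a subgroup on which one can build a section, or more simply via the norm map $N_{K/L}$ composed with extension being multiplication by $[K:L]$ together with a separate argument at $\mathfrak l$. Getting this genus-theoretic/capitulation input stated cleanly — and making sure it applies with $e>1$ where $K/L$ is ramified only at the unique prime over $p$ — is the delicate part; everything else is bookkeeping with the two primes over $2$ and the explicit root-of-unity group of $\Q(\zeta_{p^e})$.
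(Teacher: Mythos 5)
Your overall strategy is genuinely different from the paper's: you work at the level of ideals and try to descend the principal ideal $(\alpha)\co_K=\fP^a\overline{\fP}^b$ to a principal ideal of $\co_L$, whereas the paper descends the \emph{element} directly, using that the Frobenius $\sigma_2$ generates $\mathrm{Gal}(K/L)$ and fixes the ideal $(\alpha)$, so that $\sigma_2(\alpha)/\alpha$ is a unit all of whose archimedean absolute values are $1$ (by abelianness of $\mathrm{Gal}(K/\Q)$), hence a root of unity $\pm\zeta^i$; twisting $\alpha$ by $\zeta^{-i}$ then gives $\beta$ with $\sigma_2(\beta)=\pm\beta$, so $\beta^2\in\co_L$, and the oddness of $[K:L]=\phi(p^e)/2$ forces $\beta\in\co_L$. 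Your endgame (Kronecker's theorem to identify the unit $u=\alpha/\delta$ as $\pm\zeta^j$, and the uniqueness of $i$ from $\mu(L)=\{\pm1\}$ with $p$ odd) is fine, as is the bookkeeping $(\alpha)=\fP^a\overline{\fP}^b$ with $a+b=n$.

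The gap is exactly where you place it, and the justification you offer does not hold. It is \emph{not} a standard fact that total ramification of a prime in $K/L$ makes $\mathrm{Cl}(L)\to\mathrm{Cl}(K)$ injective: total ramification yields surjectivity of the norm map on class groups, not injectivity of the extension map, and capitulation in totally ramified extensions is a well-known phenomenon (e.g.\ in cyclotomic $\Z_p$-extensions of real quadratic fields, where the layers are totally ramified above $p$ and entire class groups routinely capitulate). Your fallback -- that extension followed by the norm is multiplication by $[K:L]$ -- only gives injectivity on the prime-to-$[K:L]$ part of $\mathrm{Cl}(L)$, and this is insufficient for cases the theorem actually covers: for $p=71$ one has $t_p=7$ dividing $[K:L]=p^{e-1}\cdot 35$, so the class of $\fp$ lies in the part of $\mathrm{Cl}(L)$ where this argument says nothing. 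What would genuinely rescue your route is the theorem that for imaginary abelian fields the kernel of $\mathrm{Cl}^-(F_1)\to\mathrm{Cl}^-(F_2)$ is an elementary $2$-group (Washington, Thm.\ 10.3), combined with the oddness of $h_{\Q(\sqrt{-p})}$; but that is a substantial external input, and note that the lemma you are proving in effect \emph{establishes} principality of $\fp^a\overline{\fp}^b$ in $\co_L$ as a consequence, so assuming it is close to circular. The paper's Frobenius-descent argument avoids all class-group input and is the cleaner path here.
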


\begin{proof}
	Let $\sigma_2\in \text{\text{\text{Gal}}}(K/\Q)$ be the Frobenius of $2$, so $\sigma_2$ fix $\fP$ and $\overline{\fP}$. Since the prime ideal factors of $\alpha$ is $\fP$ and $\overline{\fP}$, so we have ideal equality:
	$$(\sigma_2 (\alpha))=(\alpha),$$
	
	then there is a unit $u$ of $\co_K$ such that $\sigma_2(\alpha)=u\alpha$, then by $\text{\text{\text{Gal}}}(K/Q)$ is Abelian, we get 
	$$\sigma(\alpha)\overline{\sigma_2(\alpha)}=u\alpha \overline{u}\overline{\alpha}=2^t.$$
	
	Then $u\bar{u}=1$, and by the fact that $\text{\text{Gal}}(K/\Q)$ is abel again, we get $\sigma(u) \overline{\sigma(u)}=1$ for every $\sigma\in \text{\text{Gal}}(K/\Q).$ Therefore, $u$ is a root of unity, hence there is a unique $i \pmod{p^e}$ such that $u=\pm\zeta^i$. Let $\beta=\alpha \zeta^{-i}$, then 
	$$\sigma_2 (\beta)=\alpha(\pm\zeta^{-i}),$$ so $\sigma_2(\beta^2)=\beta^2$, since $L$ is the fixed field by $\sigma_2$, this implies $\beta^2\in \co_L$, but $[K:L]=\frac{p-1}{2}$ is odd, so $\beta=\alpha\zeta^{-i}\in \co_L$.
	
\end{proof}

\begin{lem} \label{n=t}
	$F(\lambda)$ must be one of the following elements $$(\sqrt{-p})^{et}\cdot \gamma \cdot (\pm\zeta^i)$$ or
	$$(\sqrt{-p})^{et}\cdot \overline{\gamma} \cdot (\pm\zeta^i),$$ 
	 where $i\in\{0,1,\cdots,p^e-1\}$. In particular, $F(\lambda)\notin (2)\co_K=\fP\overline{\fP}\co_K$.
\end{lem}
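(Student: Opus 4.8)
The plan is to determine the prime ideal factorization of $(F(\lambda))$ in $\co_K$, divide out the part lying above $p$ so as to land inside $\co_L$, and then apply Lemma~1 together with the list of elements of $\co_L$ of absolute value $2^{t/2}$ recorded above (namely $\pm\gamma,\pm\overline{\gamma}$).

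First I would start from $F(\lambda)\overline{F(\lambda)}=q^t=(2p^e)^t\neq0$, so $F(\lambda)\neq0$ and every prime of $\co_K$ dividing $(F(\lambda))$ also divides $(2p^e)$; the only candidates are $\fP,\overline{\fP}$ (the primes above $2$) and $(1-\zeta)$ (the unique prime above $p$). Since $(1-\zeta)$ is stable under complex conjugation (because $\overline{1-\zeta}=-\zeta^{-1}(1-\zeta)$ and $\zeta^{-1}$ is a unit), we may write $(F(\lambda))=\fP^{a}\overline{\fP}^{\,b}(1-\zeta)^{c}$ with $a,b,c\ge0$. Multiplying by the conjugate ideal and comparing with $(q^t)=(2^t)(p^{et})=\fP^{t}\overline{\fP}^{\,t}(1-\zeta)^{et\phi(p^e)}$ forces $a+b=t$ and $c=\tfrac{1}{2}et\phi(p^e)$, which is an integer since $p$ is odd.

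Next I would use $\sqrt{-p}\,\co_K=(1-\zeta)^{\phi(p^e)/2}\co_K$ to see that the ideal $((\sqrt{-p})^{et})=(1-\zeta)^{c}$ divides $(F(\lambda))$, so that $\beta:=F(\lambda)/(\sqrt{-p})^{et}$ lies in $\co_K$ with $(\beta)=\fP^{a}\overline{\fP}^{\,b}$. Because $\sqrt{-p}\cdot\overline{\sqrt{-p}}=p$, one gets $\beta\overline{\beta}=(2p^e)^t/p^{et}=2^t$. Applying Lemma~1 with $n=t$ yields a unique $i\bmod p^e$ with $\beta\zeta^{i}\in\co_L$, and this element still has absolute value $2^{t/2}$, hence it equals one of $\pm\gamma,\pm\overline{\gamma}$. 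Therefore $F(\lambda)=(\sqrt{-p})^{et}(\pm\zeta^{-i})\gamma$ or $(\sqrt{-p})^{et}(\pm\zeta^{-i})\overline{\gamma}$; replacing $-i$ by its representative in $\{0,1,\cdots,p^e-1\}$ gives the asserted list.

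For the last clause I would note that in $\co_K$ we have $(\gamma)=\fp^{t}\co_K=\fP^{t}$ and $(\overline{\gamma})=\overline{\fP}^{\,t}$, so by the previous step $(F(\lambda))$ equals $\fP^{t}(1-\zeta)^{c}$ or $\overline{\fP}^{\,t}(1-\zeta)^{c}$; in either case one of the exponents of $\fP,\overline{\fP}$ is $0$, so $\fP\overline{\fP}=(2)\co_K$ does not divide $(F(\lambda))$, i.e.\ $F(\lambda)\notin(2)\co_K$. The only delicate points here are the ideal bookkeeping in the first two steps — verifying that $c$ is an integer and that $\beta$ is genuinely an algebraic integer; after that the argument is a direct appeal to Lemma~1 and to the unit computation $\gamma\overline{\gamma}=2^t$ already carried out.
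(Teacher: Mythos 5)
Your proof is correct and follows essentially the same route as the paper: factor $(F(\lambda))$ into powers of $\fP$, $\overline{\fP}$, $(1-\zeta)$, use self-conjugacy of $(1-\zeta)$ to pin down its exponent, divide by $(\sqrt{-p})^{et}$ to get an algebraic integer of absolute value $2^{t/2}$, and invoke Lemma~1 plus the classification of such elements of $\co_L$ as $\pm\gamma,\pm\overline{\gamma}$. Your explicit ideal bookkeeping for the final clause ($F(\lambda)\notin(2)\co_K$) is a welcome addition, as the paper leaves that step implicit.
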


\begin{proof}
	It's easily seen that every element above has absolute value $q^{\frac{t}{2}}$.
	
	By consider their prime ideal factorizations we have 
	$$(F(\lambda)\overline{F(\lambda)})=(q^t)=(2^tp^{et})\co_K=\fP^t{\overline{\fP}}^t(1-\zeta)^{\phi(p^e)et}.$$
	
	Note that $\overline{(1-\zeta)}=(1-\zeta)$, we know the exact power of $(1-\zeta)$ in $(F(\lambda))$ must be $\frac{\phi(p^e)et}{2}$, so  $(1-\zeta)^{\frac{\phi(p^e)et}{2}}=(\sqrt{-p})^{et}  ||(F(\lambda))$ as ideals of $\co_K$. Hence $\alpha: =\frac{F(\lambda)}{\sqrt{-p}}^{et} \in \co_K$ and $\alpha\overline{\alpha}=2^t$.  Then by the above lemma, we konw there is a unique $i \pmod {p^e}$ such that $\beta:=\alpha {\zeta^{-i}} \in \co_L$. Also, $\beta \overline{\beta}=2^t$, and the only elements in $\co_L$ with absolute value $2^\frac{t}{2}$ are $\pm\gamma$ and $\pm\bar{\gamma}$, so $\beta=\pm \gamma$ or $\pm\overline{\gamma}$. Hence$$F(\lambda)=(\sqrt{-p})^{et}\cdot \alpha=(\sqrt{-p})^{et}\cdot \beta \zeta^{i}=(\sqrt{-p})^{et}\cdot \gamma \cdot (\pm\zeta^i)$$ or $$(\sqrt{-p})^{et}\cdot \overline{\gamma} \cdot (\pm\zeta^i)$$ for some $i$.
\end{proof}

\begin{lem} \label{f(x)=f(x+v)}
	If $v\in \Z^t _q$ is an element of order 2, then for every $\lambda \in \Z ^t _q$,  $$F(\lambda)=\pm F(\lambda+v).$$
\end{lem}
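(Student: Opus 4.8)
The plan is to convert the stated identity into a congruence modulo $2$ inside $\co_K$, and then to extract it from the splitting $(2)\co_K=\fP\overline{\fP}$ together with the precise shape of $F$ furnished by Lemma~\ref{n=t}.

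First I would unwind the definition of the transform. Since $v$ has order $2$ in $\Z^t_q$ and $q=2p^e$, each coordinate of $v$ lies in $\{0,p^e\}$; writing $w\in\{0,1\}^t$ for the indicator of the support of $v$ (so $w\neq 0$), one gets $\zeta^{-x\cdot v}_q=(-1)^{\langle x,w\rangle}$ with $\langle x,w\rangle:=\sum_j w_jx_j\bmod 2$, because $\zeta_q^{p^e}=-1$. Hence $F(\lambda+v)=\sum_{x}\zeta^{f(x)-x\cdot\lambda}_q(-1)^{\langle x,w\rangle}$, and subtracting this from $F(\lambda)$ gives
\[ F(\lambda)-F(\lambda+v)=2\sum_{\langle x,w\rangle=1}\zeta^{f(x)-x\cdot\lambda}_q. \]
The sum on the right is a sum of $q$-th roots of unity, hence an element of $\Z[\zeta_q]=\co_K$, so $F(\lambda)\equiv F(\lambda+v)\pmod{2\co_K}$.

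Next I would feed in Lemma~\ref{n=t}. Since $|F(\lambda)|=|F(\lambda+v)|=q^{t/2}$, write $F(\lambda)=(\sqrt{-p})^{et}\epsilon_1\gamma_1\zeta^{i}$ and $F(\lambda+v)=(\sqrt{-p})^{et}\epsilon_2\gamma_2\zeta^{i'}$ with $\epsilon_k\in\{\pm1\}$, $\gamma_k\in\{\gamma,\overline{\gamma}\}$ and $i,i'\in\{0,\dots,p^e-1\}$. As $(\sqrt{-p})^{et}\co_K$ is a power of the prime above $p$, it is coprime to $(2)\co_K$, so it can be cancelled from the congruence, leaving $\epsilon_1\gamma_1\zeta^{i}\equiv\epsilon_2\gamma_2\zeta^{i'}\pmod{2\co_K}$. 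Reducing modulo $\fP$, where $\gamma\equiv 0$ (because $(\gamma)=\fP^{t}$ with $t\geq1$) while $\overline{\gamma},\zeta$ and $\pm1$ are invertible, forces $\gamma_1=\gamma_2$: otherwise one side would be $\equiv0$ and the other a unit. Assume $\gamma_1=\gamma_2=\gamma$ (the subcase $\gamma_1=\gamma_2=\overline{\gamma}$ is identical with $\fP$ and $\overline{\fP}$ swapped). Now reduce modulo $\overline{\fP}$: its residue field has characteristic $2$, so $\epsilon_k\equiv 1$ and $\gamma$ is a unit there, and cancelling $\gamma$ yields $\zeta^{i}\equiv\zeta^{i'}\pmod{\overline{\fP}}$. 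Since $\overline{\fP}$ has residue characteristic $2\neq p$, reduction modulo $\overline{\fP}$ is injective on $p^e$-th roots of unity, so $i=i'$. Therefore $F(\lambda+v)=(\epsilon_2/\epsilon_1)F(\lambda)=\pm F(\lambda)$.

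The genuinely non-formal point is the passage from $\epsilon_1\gamma_1\zeta^{i}\equiv\epsilon_2\gamma_2\zeta^{i'}\pmod{2\co_K}$ to $\gamma_1=\gamma_2$ and $i=i'$; it uses only that $(2)$ splits into the two distinct primes $\fP,\overline{\fP}$ and that $\gamma$ is divisible by exactly one of them, to the $t$-th power. I expect the main obstacle in a careful write-up to be purely organizational: keeping the two symmetric cases $\gamma_k=\gamma$ and $\gamma_k=\overline{\gamma}$ straight, and recording why $1,\zeta,\dots,\zeta^{p^e-1}$ remain pairwise distinct modulo a prime above $2$.
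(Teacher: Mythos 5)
Your proof is correct and follows essentially the same route as the paper: both establish $F(\lambda)\equiv F(\lambda+v)\pmod{2\co_K}$ from $\zeta_q^{-x\cdot v}=\pm1$, then invoke the classification of $F(\lambda)$ from Lemma~\ref{n=t} and the splitting $(2)=\fP\overline{\fP}$ to force the two values to agree up to sign, with the decisive arithmetic fact in both cases being that $1-\zeta^{j}$ is coprime to $2$ for $j\not\equiv0\pmod{p^e}$. Your version is slightly more explicit than the paper's (which compresses the matching of the $\gamma$ versus $\overline{\gamma}$ factor into the remark that $F(\lambda)\in\fP\iff F(\lambda+v)\in\fP$), but there is no substantive difference.
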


\begin{proof}
	(This proof is essencially 
	same as in \cite{Ikeda} Lemma 3 and \cite{J-D} Lemma 8).
	
	As $v$ is of order $2$, for $x\in \Z^t_q$, we have $\zeta^{x\cdot v}_q=\pm1$. So $$F(\lambda)+F(\lambda+v)=\sum_{x\in \Z^t_q}{\zeta^{f(x)-x\cdot \lambda}(1+\zeta^{-x\cdot v})}\in (2)=\fP\cap\overline{\fP},$$ or said differently, $$(2) | (F(\lambda)+F(\lambda+v))$$ as ideals in $\co_K$.
	
	Then $F(\lambda)\in \fP \iff F(\lambda+v)\in \fP$. By the above Lemma, we have $F(\lambda+v)=F(\lambda)\cdot(\pm\zeta^i)$ for some $0\leq i< p^e$. Then $$F(\lambda)+F(\lambda+v)=F(\lambda)(1\pm\zeta^i).$$ Note that  if $i\neq0$, $(2)$ and $(1\pm \zeta^i)$ are coprime ideals. So $(2)|(F(\lambda))$. It is a contradition to the form of $F(\lambda)$.  Therefore, $i=0$.

\end{proof}

Let $G$ be the Sylow-$2$ subgroup fo $\Z ^t_q$, then $G\cong \F^t_2$ ,write $G=\{0,v_1,\cdots,v_{2^t-1}\}$. For every $v_i\in G$,  we define

$$N_i=N_{v_i}=\{x\in \Z^t_q| F(x)=F(x+v_i)\},$$
$$M_i=M_{v_i}=\{x\in \Z^t_q| F(x)=-F(x+v_i)\}.$$

So $\Z^t_q=N_i\sqcup M_i$, the symble $\sqcup$ means no intersection union.
Let $n_i=|N_i|,m_i=|M_i|$ be the cardinality of $N_i,M_i$, so $n_i+m_i=q^t$.
\begin{lem} \label{n=m}
	$n_i=m_i=\frac{q^t}{2}$.
\end{lem}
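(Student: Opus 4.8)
The plan is to evaluate a single cross‑correlation sum in two different ways. Fix $v_i \neq 0$ in $G$ and set
$$S_i := \sum_{\lambda \in \Z^t_q} F(\lambda)\,\overline{F(\lambda+v_i)}.$$
The two evaluations of $S_i$ — one by brute Fourier expansion, one by splitting along the partition $\Z^t_q = N_i \sqcup M_i$ — will immediately yield $n_i = m_i$, and then $n_i + m_i = q^t$ finishes the job.

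For the first evaluation, expand $F(\lambda) = \sum_x \zeta_q^{f(x)-x\cdot\lambda}$ and $\overline{F(\lambda+v_i)} = \sum_y \zeta_q^{-f(y)+y\cdot\lambda+y\cdot v_i}$, multiply, and interchange the order of summation to get
$$S_i = \sum_{x,y \in \Z^t_q} \zeta_q^{f(x)-f(y)}\,\zeta_q^{y\cdot v_i} \sum_{\lambda \in \Z^t_q} \zeta_q^{(y-x)\cdot\lambda}.$$
By the orthogonality relations for the characters of $\Z^t_q$, the innermost sum equals $q^t$ when $x=y$ and $0$ otherwise, so that $S_i = q^t \sum_{x \in \Z^t_q} \zeta_q^{x\cdot v_i}$. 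Now factor this over the coordinates: since $q=2p^e$ is even and $v_i$ has order exactly $2$, every coordinate of $v_i$ lies in $\{0,q/2\}$ and at least one equals $q/2$; the factor from such a coordinate is $\sum_{j=0}^{q-1}(-1)^j = 0$. Hence $S_i = 0$.

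For the second evaluation, recall that $F(\lambda)\neq 0$ for all $\lambda$ (because $F(\lambda)\overline{F(\lambda)} = q^t$), so by Lemma~\ref{f(x)=f(x+v)} each $\lambda$ lies in exactly one of $N_i$, $M_i$. For $\lambda \in N_i$ the summand of $S_i$ is $F(\lambda)\overline{F(\lambda)} = q^t$, and for $\lambda \in M_i$ it is $-F(\lambda)\overline{F(\lambda)} = -q^t$. Therefore $S_i = n_i q^t - m_i q^t$. Comparing with $S_i = 0$ gives $n_i = m_i$, and combined with $n_i + m_i = q^t$ we conclude $n_i = m_i = q^t/2$.

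I do not anticipate a real obstacle here; the argument is a standard Fourier/orthogonality computation of the type used in \cite{Ikeda} and \cite{J-D}. The only points that genuinely matter are that $q$ is even — so that $q/2$ is an integer and the alternating character sum along a nonzero coordinate of $v_i$ vanishes — and that $v_i$ is nonzero, which is precisely what forces $\sum_{x}\zeta_q^{x\cdot v_i}=0$; if one allowed $v_i=0$ the identity $n_i=m_i$ would of course fail.
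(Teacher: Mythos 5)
Your proof is correct and follows essentially the same route as the paper: both evaluate $\sum_\lambda F(\lambda)\overline{F(\lambda+v_i)}$ in two ways, once as $0$ (the paper cites this as a standard property of the Fourier transform of a unimodular function, while you derive it explicitly from character orthogonality and the vanishing of $\sum_x\zeta_q^{x\cdot v_i}$ for $v_i\neq 0$) and once as $q^t(n_i-m_i)$ via the partition $\Z^t_q=N_i\sqcup M_i$. No issues.
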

\begin{proof}
	As $F$ is a Fourier transform of $\zeta^{f(x)}$, which is a function from $\Z^t_q$ to $\mathbf{S}^1=\{z\in \C: |z|=1 \}$, we have $$\sum_{x\in \Z^t_q}{F(x)\overline{F(x+v_i)}}=0.$$Then by the definition of GBF we have  $F(x)\overline{F(x)}=q^t$ for each $x$, so we have
	
	$$0=\sum_{x\in N_i}{F(x)\overline{F(x+v_i)}}+\sum_{x\in M_i}{F(x)\overline{F(x+v_i)}}=q^t(n_i-m_i). $$ Hence, $n_i=m_i$ for each $i$.
\end{proof}

Here we review the  proof of  Ikeda \cite{Ikeda} for the case $t=1$, since it is the basic idea of this method. First note that if $x\in N_v$, $x+v\in N_v$, where $v$ is the unique element of order $2$ in $\Z_q$, hence $2|n_v$. But $n_v=\frac{q}{2}$ is odd. This is a contradiction.

It can be seen that Ikeda \cite{Ikeda} used one $2$-order element to prove the case $t=1$.  In Jiang-Deng \cite{J-D}, they use three $2$-order elements to treat the case $t=3$. For the remaining of this article, the goal is to generalize this method systematically to treat the general case.

In the remain of the proof, we assume  $t\geq 3$ so that $|G|\geq8$. Now we are going to define $2^{2^t -1}$ subsets of $\Z^t_q$ by using all $2$-order elements as follows:
$$X_1\cap X_2\cap \cdots \cap X_{2^t-1},$$ each $X_i=N_i \text{ or }M_i$.
Obviously, $\Z^t_q$ is a no intersection union of these subsets. Our main task is to compute the cardinality of each subsets.

\begin{lem}
	If $u,v,w\in G-\{0\}$ are pairly different and $u+v+w=0$, then we have $$N_u\cap N_v\cap M_w=N_u\cap M_v\cap N_w=M_u\cap N_v\cap N_w=M_u\cap M_v\cap M_w=\emptyset.$$
\end{lem}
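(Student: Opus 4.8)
The plan is to combine a divisibility‑by‑$4$ estimate for a suitable signed sum of Fourier coefficients with the fact, established in Lemma~\ref{n=t}, that $F(\lambda)$ is never divisible by $2$ in $\co_K$.

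\emph{Step 1: a congruence modulo $4$.} Since $u,v,w$ all have order $2$ and $u+v+w=0$, we get $w=u+v$; hence for every $x\in\Z_q^t$ the dot products satisfy $x\cdot w=x\cdot u+x\cdot v$ in $\Z_q$, so $\zeta_q^{-x\cdot w}=\zeta_q^{-x\cdot u}\zeta_q^{-x\cdot v}$. This produces the factorization
$$1+\zeta_q^{-x\cdot u}+\zeta_q^{-x\cdot v}+\zeta_q^{-x\cdot w}=\bigl(1+\zeta_q^{-x\cdot u}\bigr)\bigl(1+\zeta_q^{-x\cdot v}\bigr).$$
As in the proof of Lemma~\ref{f(x)=f(x+v)}, $\zeta_q^{x\cdot u},\zeta_q^{x\cdot v}\in\{+1,-1\}$, so each factor on the right lies in $\{0,2\}$ and their product lies in $\{0,4\}$. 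Multiplying by the unit $\zeta_q^{f(x)-x\cdot\lambda}$ and summing over $x$ then gives
$$F(\lambda)+F(\lambda+u)+F(\lambda+v)+F(\lambda+w)\in 4\,\co_K$$
for every $\lambda$.

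\emph{Step 2: deriving the contradiction.} Suppose $x$ lay in one of the four sets in the statement; I read off the signs in each case. For $x\in N_u\cap N_v\cap M_w$ the left side of the display equals $(1+1+1-1)F(x)=2F(x)$; for $x\in N_u\cap M_v\cap N_w$ it is $(1+1-1+1)F(x)=2F(x)$; for $x\in M_u\cap N_v\cap N_w$ it is $(1-1+1+1)F(x)=2F(x)$; and for $x\in M_u\cap M_v\cap M_w$ it is $(1-1-1-1)F(x)=-2F(x)$. In every case $4\mid 2F(x)$ in $\co_K$, i.e.\ $F(x)\in(2)\co_K=\fP\overline{\fP}\co_K$, contradicting Lemma~\ref{n=t}. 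Hence all four intersections are empty. (Conceptually, these are precisely the sign patterns for which the product of the three signs is $-1$, which makes the total signed sum $\equiv2\pmod4$ rather than $0$ or $4$; for the other patterns no contradiction arises.)

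There is essentially no hard step here. The only points needing care are the group‑theoretic bookkeeping that $u+v+w=0$ in the $2$‑torsion subgroup forces $w=u+v$ (so the three additive characters multiply as claimed), and invoking Lemma~\ref{n=t} in the precise form $F(\lambda)\notin(2)\co_K$, which is exactly what converts the mod‑$4$ information into a contradiction.
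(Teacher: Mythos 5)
Your proof is correct. Both your argument and the paper's ultimately rest on the same arithmetic fact --- that $1+\zeta_q^{-x\cdot u}\in\{0,2\}$ for an order-$2$ element $u$, so a suitable combination of values of $F$ is divisible by a power of $2$, contradicting $F(\lambda)\notin(2)\co_K$ from Lemma~\ref{n=t} --- but you package it differently. The paper first reduces all four intersections to the single case $M_u\cap M_v\cap M_w$ via the translation equivalences $x\mapsto x+u$, etc., and then argues with partial character sums: from $F(x)=-F(x+u)$ it extracts $\sum_{y\cdot u=0}=0$ (and likewise for $v,w$), and combines these to get $F(x)=-2\sum_{y\cdot u=0,\,y\cdot v=0}\in(2)\co_K$. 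You instead prove the single uniform congruence $F(\lambda)+F(\lambda+u)+F(\lambda+v)+F(\lambda+w)\in 4\co_K$ via the factorization $(1+\zeta_q^{-x\cdot u})(1+\zeta_q^{-x\cdot v})$, and then read off that each of the four forbidden sign patterns (exactly those with sign product $-1$) forces the four-term sum to equal $\pm 2F(x)$, hence $F(x)\in(2)\co_K$. The two are linked by the identity $\sum_{y\cdot u=0,\,y\cdot v=0}=\tfrac14\bigl(F(x)+F(x+u)+F(x+v)+F(x+w)\bigr)$, so yours is essentially the paper's computation run in reverse; what it buys you is that the translation-reduction step becomes unnecessary and all four cases are dispatched at once, at the cost of making the vanishing of the individual partial sums less visible. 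One small presentational note: your sign bookkeeping needs only the definitions of $N_u$ and $M_u$ (i.e.\ $F(x+u)=\pm F(x)$ by membership), not Lemma~\ref{f(x)=f(x+v)} itself, so the appeal to that lemma in your Step~2 framing is not actually load-bearing.
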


\begin{proof}
	(We only need the fact $F(\lambda)\notin (2)\co_K$, so the proof below is essentially same as the case $p=23$ as in \cite{J-D} Lemma 11, though they choose three special $2$-order elements there).
	
	First, note that $$x\in N_u\cap N_v\cap M_w,$$
	$$\iff x+u \in N_u\cap M_v\cap N_w,$$
	$$\iff x+v \in M_u\cap N_v\cap N_w,$$
	$$\iff x+w \in M_u\cap M_v\cap M_w.$$
	
	So it's enough to prove $M_u\cap M_v\cap M_w=\emptyset$.

	Second, note that the map $$(    \text{    }\cdot u): \Z^t_q \longrightarrow \{0,\frac{q}{2}\}\subset\Z_q$$
	
	$$y\mapsto y\cdot u$$ is surjective. So $\zeta^{y\cdot u}=0$ if $y\cdot u=0$ and $\zeta^{y\cdot u}=-1$ if $y\cdot u\not=0$
	
	 For simplicity, we let $\sum_{y}=\sum_{y}{\zeta^{f(y)-x\cdot y}}$.
	 Now take an element $x\in M_u\cap M_v\cap M_w$, then
	$$F(x)=\sum_{y\in \Z^t_q}=\sum_{y\cdot u=0}+\sum_{y\cdot u\neq0}$$
	$$=\!=\!-F(x+u)=-\sum_{y\cdot u=0}+\sum_{y\cdot u\neq0}.$$
	
	So we get $$0=\sum_{y\cdot u=0}=\sum_{y\cdot u=0, y\cdot v=0	}+\sum_{y\cdot u=0,y\cdot v\neq0}.$$
	
	Similarily, we have $$0=\sum_{y\cdot v=0}=\sum_{y\cdot u=0, y\cdot v=0	}+\sum_{y\cdot u\not=0,y\cdot v=0}.$$
	
	Also since $$0=\sum_{y\cdot w=0},$$ we have that
	$$F(x)=\sum_{y\cdot w\not=0}=\sum_{y\cdot (u+v)\neq0}=\sum_{y\cdot u=0,y\cdot v\neq0}+\sum_{y\cdot u\not=0,y\cdot v=0}=-2\sum_{y\cdot u=0,y\cdot v=0}\in (2)\co_K.$$ This contradict to the Lemma ~\ref{n=t}.	 So $M_u\cap M_v\cap M_w=\emptyset$.
	
\end{proof}

 The following observation (Lemma~\ref{observation}) will make us  generalize  Ikeda's and Jiang-Deng's method systematically, since it tells us among the $2^{2^t-1}$ sets, there are at most $2^t$ nonempty sets. And  these $2^t$ sets are rather ``nice". 

\begin{lem} \label{observation}
	Let $N_i,M_i$ be as above, $X_1\cap X_2\cap \cdots \cap X_{2^t-1}\subset \Z^t_q$ with $X_i=N_i$ or $M_i$. If $\{v_i\in G|X_i=N_i\}\cup\{0\}$ is not a subgroup of $G$ with index $1$ or $2$, then it must be empty.
\end{lem}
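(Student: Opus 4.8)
The plan is to show that the set $S=\{v_i\in G\mid X_i=N_i\}\cup\{0\}$, if nonempty, is forced to be a subgroup of index $1$ or $2$ in $G\cong\F_2^t$. The engine driving this is the previous lemma: whenever $u,v,w$ are distinct nonzero elements of $G$ with $u+v+w=0$ (equivalently $w=u+v$), the four "odd-parity" intersection patterns $N_uN_vM_w$, $N_uM_vN_w$, $M_uN_vN_w$, $M_uM_vM_w$ are empty. Translating to $S$: if $X_1\cap\cdots\cap X_{2^t-1}$ is nonempty, then for any such triple $\{u,v,u+v\}$ the number of the three elements $u,v,u+v$ lying \emph{outside} $S$ (i.e.\ in the $M$-set) cannot be odd. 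So among $u$, $v$, $u+v$, the count of non-members of $S$ is $0$ or $2$; equivalently, $u\in S$ and $v\in S$ implies $u+v\in S$ (the "$0$ outside" case), while if exactly one of $u,v$ is in $S$ then $u+v\notin S$ (the "$2$ outside" case), and if neither $u$ nor $v$ is in $S$ then $u+v\in S$.

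First I would record these three parity implications cleanly, being careful about the edge case where $u=v$ or one of them is $0$ (the lemma needs distinct nonzero elements; but $0\in S$ and $u+u=0$, so those cases are handled trivially by $0\in S$). The first implication, $u,v\in S\Rightarrow u+v\in S$, together with $0\in S$ and the fact that every element of $\F_2^t$ is its own inverse, says exactly that $S$ is closed under addition, hence $S$ is a subgroup of $G$. Then I would invoke the third implication: if $S\neq G$, pick $a\notin S$; for any other $b\notin S$, applying "neither $a$ nor $b$ in $S$ $\Rightarrow a+b\in S$" shows $a+b\in S$, i.e.\ $b\in a+S$. Hence the complement $G\setminus S$ is contained in the single coset $a+S$; since $G\setminus S$ is nonempty and $a\in G\setminus S$ forces $a+S\subseteq G\setminus S$ would be false unless... here I need the coset $a+S$ to be entirely outside $S$, which holds because $S$ is a subgroup and $a\notin S$. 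So $G\setminus S=a+S$, giving $|G|=2|S|$, i.e.\ index $2$. Combined with the index-$1$ case $S=G$, this is the desired dichotomy, so any $S$ not of this form must be empty.

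The main obstacle, and the place to be most careful, is the bookkeeping of the "$0$ or $2$" parity count and making sure the three implications I extract are exactly equivalent to that count (rather than, say, accidentally also forbidding the all-in-$S$ case). Concretely: "number of $u,v,u+v$ outside $S$ is even" for the triple means the patterns with an odd number of $M$'s are excluded, which is precisely the content of the preceding lemma, so this is a genuine restatement and not an over-claim. A secondary subtlety is the degenerate triples: the lemma as stated requires $u,v,w$ pairwise distinct and nonzero, so when verifying closure $u,v\in S\Rightarrow u+v\in S$ I must separately dispose of $u=0$, $v=0$, and $u=v$ — all immediate since $0\in S$ and $u+u=0\in S$. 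Once those are dispatched, the subgroup argument and the index-$2$ coset argument are short and formal, so I expect the write-up to be brief.
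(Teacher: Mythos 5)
Your proposal is correct and follows essentially the same route as the paper: closure of $S$ under addition comes from the emptiness of $N_u\cap N_v\cap M_{u+v}$, and the index-$\leq 2$ bound comes from the emptiness of $M_u\cap M_v\cap M_{u+v}$, which you package as ``the complement of $S$ is a single coset of $S$.'' Your explicit handling of the degenerate triples and the coset phrasing of the index step are, if anything, a slightly cleaner write-up than the paper's appeal to the orthogonal complement.
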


\begin{proof}
	If  $A:=\{v_i\in G|X_i=N_i\}\cup\{0\}$ is not a subgroup, then there are $u,v\in A$ such that $u+v\notin A$. Then by the above Lemma  $$X_1\cap X_2\cap \cdots \cap X_{2^t-1}\subset N_u\cap N_v\cap M_{u+v}= \emptyset.$$ So $A$ is a subgroup and also a $\F_2 $ vector subspace of $G$. If the index of $A$ is larger than $2$, then its $\F_2$-dimension $\leq {t-2}$, then the dimension of its orthgonal complenent subspace  $\geq 2$, then at least there are three different element $u,v,w\notin A$. Then by the above Lemma
		$$X_1\cap X_2\cap \cdots \cap X_{2^t-1}\subset M_u\cap M_v\cap M_w= \emptyset.$$

\end{proof}

\begin{proof}[Proof of The Theorem~\ref{main}]

There are $2^t-1$ subgroups of $G$ with index $2$, because the $$ \{\text{subgroups of index 2}\}=$$$$\{ \F_2 \text{ vector subspace of G with dimsension } t-1\} \longleftrightarrow  \{1-\text{dim } \F_2 \text{vector subspace of } G\} $$ by taking the orthgonal complement subspace. Denote them by $H_1,\cdots,H_{2^t -1}$, we sometimes use the symble $H$ to denote some $H_i$.

Let $$N_G=\bigcap_{i=1}^{2^t-1} {N_i},$$
$$N_{H_i}=(\bigcap_{v\in H_i-\{0\}} N_{v} )\bigcap (\bigcap_{u\notin H_i} M_u)$$
Let $n_G=|N_G|$, $h_i=h_{H_i}=|N_{H_i}|$.

From $$\Z^t_q=(\bigsqcup_{i=1}^{2^t-1} N_{H_i})\bigsqcup N_G$$
  we get an equation
\begin{equation}
\sum_{i=1}^{2^n-1} h_i +n_G=q^t
\end{equation}

 On the other hand $$N_i=\bigsqcup_{H\ni v_i} N_H$$ for every $i$ satisfies ${1\leq i\leq 2^t-1}$, we get $2^t-1$ equations:

$$n_i=\frac{q^t}{2}=\sum_{H\ni v_i} h_H+n_G$$ for $i=1,2,\cdots,2^t-1.$

Note that each $H_i-\{0\}$ has $2^{t-1}-1$ elements. Summing  these $2^{t-1}-1$ equations we get
$$(2^t-1) \frac{q^t}{2}=(2^{t-1}-1)\sum_{i=1}^{2^t-1}{h_i}+(2^t-1)n_G$$

Combine with (1), we get $n_G=\frac{q^t}{2^t}=p^{te}$ is an odd number.

However, if $x\in N_G$, then we have $x+v\in N_G$ for all $v\in G$, so $2^t|n_G$. This contradiction shows that there doesn't exist GBF with type $[t=t_p,q=2p^e]$ for any $e\geq 1$, and this completes the proof of our Theorem~\ref{main}.
\end{proof}

\section{GBF From $\Z^n_2$ to $ \Z_m$}

In this section, we consider another kind of  generalized bent function (GBF), and use the same method of the above section we get some similar nonexistence result. 

A function $f$ from $\Z^n_2$ to $\Z_m$ is called a GBF with type $[n,m]$ if the equality holds  
$$|F(\lambda)|:=|\sum_{x\in \Z^n_2}\zeta^{f(x)}_m\ (-1)^{x\cdot \lambda}|=2^{\frac{n}{2}}$$ for every $\lambda\in\Z^n_2$, where $x\cdot \lambda$ is the standard dot product.

There are many constructions of this type GBF, for example see\cite{smgs}. For nonexistence results, a good reference is \cite{Liu-Feng},
where Liu-Feng-Feng  proved many nonexistence results of this type GBF by showing there is no cyclotomic integers with prescribe absolute values (in our case, there exist cyclotoimc integers with prescribe absolute values).

By the same method used in the Section 3, we proved the following theorem :

\begin{thm} \label{thm2}
If $p\equiv7 \pmod 8$ is a prime and  the order of $2$ modulo $p$ is $\frac{p-1}{2}$, then there doesn't exist GBF with type $[n=t_p,m=2p]$. If $p$ further satisfies $2^{p-1}\not\equiv 1 \pmod {p^2}$, then there doesn't exist GBF with type $[n=t_p,m=2p^e]$ for every $e\geq1$.	
	
\end{thm}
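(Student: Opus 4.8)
The plan is to mimic the proof of Theorem~\ref{main} almost verbatim, replacing the group $\Z_q^n$ by $\Z_2^n$ and the character sum by the one appearing in the definition of a GBF of type $[n,m]$. First I would set $n=t_p$, $m=2p^e$, $\zeta=\zeta_{p^e}$, $L=\Q(\sqrt{-p})$, $K=\Q(\zeta)$, and observe that the Fourier coefficients $F(\lambda)=\sum_{x\in\Z_2^n}\zeta_m^{f(x)}(-1)^{x\cdot\lambda}$ lie in $\Z[\zeta_m]=\Z[\zeta_{2p^e}]=\Z[\zeta_{p^e}]=\co_K$ (since $\zeta_2=-1$), and that $F(\lambda)\overline{F(\lambda)}=2^n$. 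So here the relevant absolute value is $2^{n/2}=2^{t_p/2}$ rather than $(2p^e)^{t_p/2}$; in particular the factor $(\sqrt{-p})^{et}$ that appeared in Lemma~\ref{n=t} disappears, and the first lemma (the refined Feng Lemma~2.2) applies directly to $\alpha=F(\lambda)$: there is a unique $i\pmod{p^e}$ with $F(\lambda)\zeta^{-i}\in\co_L$, and since the only elements of $\co_L$ of absolute value $2^{t/2}$ are $\pm\gamma,\pm\overline\gamma$, we get that $F(\lambda)$ is one of $\gamma(\pm\zeta^i)$ or $\overline\gamma(\pm\zeta^i)$. The key consequence, exactly as before, is $F(\lambda)\notin(2)\co_K$.

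Next I would transcribe Lemmas~\ref{f(x)=f(x+v)} through \ref{observation} into this setting. Since the domain is now $\Z_2^n$, every nonzero $v\in\Z_2^n$ has order $2$, so the analogue of Lemma~\ref{f(x)=f(x+v)} reads: for every $v\in\Z_2^n\setminus\{0\}$ and every $\lambda$, $F(\lambda)=\pm F(\lambda+v)$; the proof is identical, using $(-1)^{x\cdot v}=\pm1$ to conclude $(2)\mid(F(\lambda)+F(\lambda+v))$, then the form of $F$ to force the sign ambiguity $1\pm\zeta^i$ to have $i=0$. The role previously played by the Sylow-$2$ subgroup $G\cong\F_2^t$ of $\Z_q^t$ is now played by the whole group $\Z_2^n\cong\F_2^n$ with $n=t_p$, so every definition ($N_v,M_v,N_G,N_{H_i}$, the $2^{2^n-1}$ intersection sets, the index-$2$ subgroups $H_i$) carries over with $G=\Z_2^n$. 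Lemma~\ref{n=m} ($n_i=m_i=2^{n-1}$) holds by the same orthogonality relation $\sum_{x}F(x)\overline{F(x+v_i)}=0$, which here follows from $\sum_{x\in\Z_2^n}(-1)^{x\cdot w}=0$ for $w\neq0$ applied to $w=v_i$, together with $F(x)\overline{F(x)}=2^n$. The vanishing-sum lemma ($N_u\cap N_v\cap M_w=\dots=\emptyset$ when $u+v+w=0$) and Lemma~\ref{observation} depend only on $F(\lambda)\notin(2)\co_K$ and linear algebra over $\F_2$, so they transfer unchanged.

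Finally I would run the counting argument from the proof of Theorem~\ref{main} verbatim with $q^t$ replaced by $2^n$: from $\Z_2^n=(\bigsqcup_i N_{H_i})\sqcup N_G$ one gets $\sum_i h_i+n_G=2^n$, from $N_i=\bigsqcup_{H\ni v_i}N_H$ and $n_i=2^{n-1}$ one gets the $2^n-1$ equations $2^{n-1}=\sum_{H\ni v_i}h_H+n_G$, summing all of them yields $(2^n-1)2^{n-1}=(2^{n-1}-1)\sum_i h_i+(2^n-1)n_G$, and combining with the first identity gives $n_G=2^n/2^n=1$. But $N_G$ is stable under translation by all of $\Z_2^n$, so $2^n\mid n_G$, contradicting $n_G=1$ (here $n=t_p\geq1$, and for $t_p=1$, i.e.\ $p=7$, one falls back on the single-element argument of Ikeda recalled after Lemma~\ref{n=m}). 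I expect no genuinely new obstacle: the only point requiring a moment's care is checking that the Fourier coefficients really land in $\co_K$ and have absolute value $2^{n/2}$ (so that the $(\sqrt{-p})^{et}$ bookkeeping is absent and the first lemma applies to $F(\lambda)$ itself), and that the index-$2$ subgroup combinatorics still makes sense when $n=t_p$ might be as small as $3$; for $t_p=1$ the theorem reduces to a special case of Ikeda's result and needs no new argument.
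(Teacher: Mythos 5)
Your proposal is correct and follows exactly the route the paper itself takes for Theorem~\ref{thm2}: reduce to the observation that $F(\lambda)\in\co_K$ with $F(\lambda)\overline{F(\lambda)}=2^{t_p}$, apply the first two lemmas (without the $(\sqrt{-p})^{et}$ factor) to pin down $F(\lambda)$ as $\gamma(\pm\zeta^i)$ or $\overline{\gamma}(\pm\zeta^i)$, hence $F(\lambda)\notin(2)\co_K$, and then rerun the subgroup-counting argument with $G=\Z_2^{t_p}$ in place of the Sylow-$2$ subgroup. The details you supply (membership of $F(\lambda)$ in $\Z[\zeta_{p^e}]$, the orthogonality relation giving $n_i=m_i=2^{n-1}$, and the Ikeda-style fallback for $t_p=1$) are exactly what the paper's sketch leaves implicit, and they check out.
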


The proof of Theorem ~\ref{thm2} is as same as the proof of Theorem~\ref{main}, we only give a skectch of this proof:

\begin{proof}
	
	Let $f$ be a GBF from $\Z^n_2$ to $\Z_m$, then $F(\lambda)\overline{F(\lambda)}=2^{\frac{t}{2}}$, where $t=t_p$. The notations below is the same in the above section.

	By tracing the proof of Lemma 1,2 we get that 
	$F(\lambda)$ must be one of the following elements $$ \gamma \cdot (\pm\zeta^i)$$ or
$$ \overline{\gamma} \cdot (\pm\zeta^i)$$ 
$i=0,1,\cdots,p^e-1$. In particular, $F(\lambda)\notin (2)\co_K=\fP\overline{\fP}\co_K$. 

Once we get the form of $F(\lambda)$ must be, the remain proof is exactly the  same as before.

\end{proof}

\section{Conclusion and future work}
In this article we give some new results of nonexistence of GBF on the condition that there exist cyclotomic integers with prescribe absolute values. Of course, the nonexistence problem is far from solved. Our future work toward the case when the order of $2$ modulo $p$ is less than $\frac{p-1}{2}$. For example the case $p=31,t_{31}=3$, we don't konw whether there exists GBF from $\Z^3_q$ to $\Z_q$ with type $[3,q=2\times 31]$ at present.

\vspace{5mm}

\noindent \textbf{Acknowledgments}\quad  We thank Yupeng Jiang, Chang Lv and Jiangshuai Yang for helpful discussions. The work of this paper
was supported by the NNSF of China (Grant No. 11471314) and the National Center for Mathematics and Interdisciplinary Sciences, CAS.

\end{document}